\begin{document}

\newtheorem{corollary}{Corollary}[section]
\newtheorem{lemma}{Lemma}[section]
\newtheorem{theorem}{Theorem}[section]
\newtheorem{definition}{Definition}[section]
\newtheorem{paradox}{Paradox}[section]

\title[Stability of the Centrality of Unions of Networks on the Same Vertex Set]{Stability of the Centrality of Unions of Networks on the Same Vertex Set.}

\author{Chuan Wen, Loe$^1$ and Henrik Jeldtoft Jensen$^2$}

\address{Department of Mathematics and Complexity \& Networks Group\\ Imperial College London, London, SW7 2AZ, UK\\}
\ead{(1) c.loe11@imperial.ac.uk,  (2) h.jensen@imperial.ac.uk}
\begin{abstract}
Let $G^1(V,E_1)$ and $G^2(V,E_2)$ be two networks on the same vertex set $V$ and consider the union of edges $G(V, E_1 \cup E_2)$. This paper studies the stability of the Degree, Betweenness and Eigenvector Centrality of the resultant network, $G(V, E_1 \cup E_2)$. Specifically assume $v^1_{max}$ and $v^c_{max}$ are the highest centrality vertices of $G^1(V,E_1)$ and $G(V, E_1 \cup E_2)$ respectively, we want to find $Pr(v^1_{max} = v^c_{max})$.
\end{abstract}

\maketitle

\section{Introduction}
Zachary Karate Club Network represents the consistent interactions of the karate club members outside of classes and club meetings \cite{zachary1977ifm}. ``Outside interactions" were defined by  8 different relationships. One of such relationships is the association in and between academic classes at the university. The 8 different relationships were then combined as edges (of different types) of Zachary Karate Club Network.

In abstraction, each relationship forms a distinct network on the same vertex set (karate club members). The resultant Zachary Karate Club Network combines the edges of the different networks, where the process is known as the \emph{edge union of networks on the same vertex set}.

The degree distribution and clustering coefficient of the resultant network generated by this process were studied in \cite{OurPaper}. The present paper extends the investigation by studying the asymptotic behaviour of the network centrality.

The centrality of a network ranks the vertices according to their importance when information flows through the network. A high centrality vertex is usually inferred as a hub of the network, where its absence could severely decrease the efficiency of communication across the network. For example the top centrality vertex could be a major airport, city or a celebrity in a social network.

There are many forms of social networks like email, blogs, Facebook or Twitter. Assume every social network has a different individual of highest centrality, since individuals have their preferred mode of communication. Thus a celebrity in Twitter most likely does not command the same influence among avid Facebook users.

What is the probability that the highest centrality Twitter user is also placed as a highly central figure in the edge union of Twitter and Facebook network? Since centrality is computationally expensive for large network, it will be helpful to have a theoretical understanding to allow us to estimate how the ranking of the centrality of a vertex changes when networks are unioned together. In short we aim to extrapolate some information about the centrality without recomputing the combined network.

\section{Definitions and Preliminaries}
\begin{definition}
Let $G^1(V,E_1)$ and $G^2(V,E_2)$ be two networks on the same vertex set $V$ and edge set $E_1$ and $E_2$ respectively, where an element in the edge set $e \in E_1$ (or $E_2$) is a vertex pair $e = (u,v)$ and $u,v \in V$. The \textbf{edge union of the networks} gives the \textbf{composite network}, $G^c = G(V, E_1 \cup E_2)$.
\end{definition}

Erd\H{o}s-R\'{e}nyi/Gilbert network is defined as a random network $G_{n,p}$, where there are $n$ vertices and a pair of vertices are connected with probability $p$ \cite{citeulike:4012374,Gilbert1959}. The probability that an edge is in neither $G^1_{n,p}$ nor $G^2_{n,q}$ is $(1-p)(1-q)$. Hence a pair of vertices in the composite network are connected with probability $1-(1-p)(1-q)$ with
\begin{equation}\label{eq:ER2ER}
G^c = G^1_{n,p} \cup G^2_{n,q} \sim G_{n,1-(1-p)(1-q)}
\end{equation}

We can union multiple $G_{n,p}$ networks together, since the composite network is an Erd\H{o}s-R\'{e}nyi network. Hence the edge union of $m$ networks can be expressed as the union of two networks: $G^1$ and $\cup_{2\leq i \leq m}G^i$. Therefore it is sufficient to focus on the results of the edge union of two networks.

\begin{definition}
Let $G^1(V,E_1)$ and $G^2(V,E_2)$ be two networks on the same vertex set $V$. An edge $e \in E_1 \cup E_2$ in the composite network is called a \textbf{common edge} if $e \in E_1 \cap E_2$. Hence the set of common edges is $E_1 \cap E_2$. (see Fig. \ref{graph_example})
\end{definition}

\begin{center}
  \includegraphics[width=130mm]{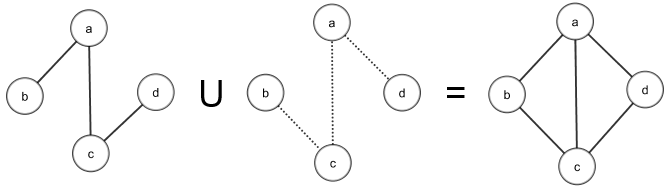}
    \captionof{figure}{Edge union of two networks on the same vertex set. The solid edges and the dotted edges are from the edge set of different networks. The composite network is the rightmost network. Edge $\{a,c\}$ is a common edge between the two networks.}\label{graph_example}
\end{center}

\begin{lemma}\label{lemma:ExpectedCommonEdges}
Let $G^1(V,E_1)$ and $G^2(V,E_2)$ be two networks on the same vertex set $V$. Consider a vertex $v\in V$ where its degree at $G^1$ and $G^2$ is $d_1=deg_1(v)$ and $d_2=deg_2(v)$ respectively. The expected number of common edges between $G^1$ and $G^2$ at vertex $v$ is:
\begin{equation}
\mathcal{C}(v,G^1,G^2) = \frac{d_1 \cdot d_2}{n-1}
\end{equation}
\end{lemma}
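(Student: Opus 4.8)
The plan is to count the common edges incident to $v$ with indicator random variables and then apply linearity of expectation. By definition, $\mathcal{C}(v,G^1,G^2)$ is the number of edges incident to $v$ that lie in $E_1\cap E_2$, i.e.\ the number of vertices $u\neq v$ that are neighbours of $v$ in \emph{both} $G^1$ and $G^2$. For each candidate neighbour $u\in V\setminus\{v\}$ I would introduce the indicator $X_u=\mathbf{1}\big[(u,v)\in E_1\cap E_2\big]$, so that $\mathcal{C}(v,G^1,G^2)=\sum_{u\neq v}X_u$ and hence
\begin{equation}
\mathbb{E}\big[\mathcal{C}(v,G^1,G^2)\big]=\sum_{u\neq v}\Pr\big[(u,v)\in E_1\cap E_2\big].
\end{equation}
This reduces everything to computing a single per-vertex probability.

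To evaluate that probability I would invoke the Erd\H{o}s--R\'enyi framework of equation~(\ref{eq:ER2ER}): conditioned on the degree of $v$, the set of neighbours of $v$ in each network is a uniformly random subset of the $n-1$ remaining vertices, of size $d_1$ in $G^1$ and $d_2$ in $G^2$, and the two networks are generated independently. Under this model a fixed $u\neq v$ satisfies $\Pr[(u,v)\in E_1]=d_1/(n-1)$ and $\Pr[(u,v)\in E_2]=d_2/(n-1)$, and by independence the joint event factorises,
\begin{equation}
\Pr\big[(u,v)\in E_1\cap E_2\big]=\frac{d_1}{n-1}\cdot\frac{d_2}{n-1}=\frac{d_1 d_2}{(n-1)^2}.
\end{equation}

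Summing the $n-1$ identical terms then gives $\mathbb{E}[\mathcal{C}]=(n-1)\cdot d_1 d_2/(n-1)^2=d_1 d_2/(n-1)$, which is the claim. I expect the arithmetic to be entirely routine; the only real point requiring care is the probabilistic model itself, since the lemma is phrased for arbitrary networks while ``expected'' presupposes a source of randomness. I would make explicit that the relevant model is the one inherited from the Erd\H{o}s--R\'enyi construction above, so that degrees fix the \emph{sizes} of the neighbourhoods while the specific incident vertices are uniform and independent across $G^1$ and $G^2$. As a cleaner equivalent formulation, one may keep $G^1$ fixed and randomise only $G^2$: summing the indicator over the $d_1$ actual $G^1$-neighbours of $v$, each of which lies in $E_2$ with probability $d_2/(n-1)$, yields the same $d_1 d_2/(n-1)$ and shows that a uniform-neighbour assumption on a single network suffices.
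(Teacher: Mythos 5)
Your proof is correct. Note that the paper itself states Lemma~\ref{lemma:ExpectedCommonEdges} without any proof (it is inherited from the authors' earlier companion paper \cite{OurPaper}), so there is no in-paper argument to compare against; your indicator-variable/linearity-of-expectation argument is the natural one, and it is consistent with how the lemma is actually used later (in the proof of Lemma~\ref{deg_centrality}, expected degrees are multiplied and divided by $n-1$ exactly as your formula prescribes). You also correctly flagged the one genuine subtlety: the statement speaks of two arbitrary networks yet asks for an expectation, so a probabilistic model must be supplied, and your choice --- neighbourhoods that are uniform subsets of the $n-1$ other vertices given the degrees, independent across the two networks (or, equivalently, randomising only one network) --- is precisely the assumption under which the formula $d_1 d_2/(n-1)$ holds and under which the paper deploys it.
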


Lastly we now use the following notations: $G^1_{n,p}$ and $G^2_{n,q}$ are networks on the same vertex set. Let $v^1_{max}$ and $v^c_{max}$ be the vertices with the highest centrality in $G^1$ and $G^c_{n,p'} = G^1 \cup G^2$ respectively, where $p' = 1 - (1-p)(1-q)$. The degree of vertex $v$ at $G^1$ and $G^c$ are denoted by $deg_1(v)$ and $deg_c(v)$ respectively. We want to study the probability of event $E$ where $v^1_{max} = v^c_{max}$, i.e. $Pr(E) = Pr(v^1_{max} = v^c_{max})$.

\section{Degree Centrality}\label{sec:deg_centrality}
Degree Centrality ranks the vertices according to their degree such that a vertex with a higher degree has higher centrality ranking.

\subsection{Analytical Results}
\begin{theorem}\label{thm:max_deg}
Given $G_{n,p}$, where $0<p<1$. If $\lim_{n \rightarrow \infty}np(1-p)/\ln(n) \rightarrow \infty$ and $x^*$ is a fixed real number, then
\begin{equation*}
\lim_{n \rightarrow \infty} Pr(\mbox{Max Degree} < a^* + \sqrt{np(1-p)}b^*x^*) = e^{-e^{-x^*}}
\end{equation*}
where
\begin{equation*}
a^* = np + \sqrt{2p(1-p)n\ln{n}}\Big(1 - \frac{\ln{\ln{n}}}{4\ln{n}} - \frac{\ln{(2\sqrt{\pi})}}{2\ln{n}} \Big);
\end{equation*}
\begin{equation*}
b^* = \frac{\sqrt{2np(1-p)\ln{n}}}{2\ln{n}}
\end{equation*}
\end{theorem}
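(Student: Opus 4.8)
The plan is to realise the maximum degree as the maximum of $n$ nearly independent $\mathrm{Binomial}(n-1,p)$ random variables and then apply classical extreme-value theory in the Gumbel domain of attraction. Write the degree of vertex $v$ as $D_v=\sum_{u\neq v}X_{uv}$, where $X_{uv}=1$ if $\{u,v\}\in E$ and $0$ otherwise, so each $D_v$ is $\mathrm{Binomial}(n-1,p)$ with mean $(n-1)p$ and variance $(n-1)p(1-p)$. The $D_v$ are not independent, since $D_u$ and $D_v$ both contain the shared indicator $X_{uv}$; this coupling contributes only an $O(1/n)$ correlation. First, I would therefore argue that the joint upper-tail events $\{D_v\ge k\}$ are asymptotically independent: computing $Pr(\text{Max Degree}<k)$ through Bonferroni (second-moment) bounds, I would show that the expected number of vertices with $D_v\ge k$ tends to $e^{-x^*}$ while the pairwise term $E[\#\{u\neq v:\,D_u\ge k,\,D_v\ge k\}]$ is of strictly smaller order, so that the count of vertices above the threshold is asymptotically Poisson and $Pr(\text{Max Degree}<k)\to e^{-\lambda}$ with $\lambda=e^{-x^*}$.

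Second, I would control the single-vertex tail $Pr(D_v\ge k)$ by a normal (moderate-deviation) approximation. The critical threshold $k$ sits about $\sqrt{2\ln n}$ standard deviations above the mean, so the Gaussian approximation to the Binomial must remain valid that far into the tail. This is precisely where the hypothesis $np(1-p)/\ln n\to\infty$ enters: it guarantees $\sqrt{2\ln n}=o\big(\sqrt{np(1-p)}\big)$, placing $k$ in the moderate-deviation regime where $Pr(D_v\ge k)\sim Pr(Z\ge t)$ for $Z$ standard normal and $t=(k-(n-1)p)/\sqrt{(n-1)p(1-p)}$, and where the tail estimate $Pr(Z\ge t)\sim\varphi(t)/t$ applies.

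Third, I would carry out the textbook Gumbel computation for i.i.d.\ standard normals. Setting $n\,Pr(Z\ge t_n)\to e^{-x^*}$ and solving $n\,\varphi(t_n)/t_n=e^{-x^*}$ by taking logarithms yields $t_n=a_n+b_n x^*$ with $a_n=\sqrt{2\ln n}-\big(\ln\ln n+\ln(4\pi)\big)/(2\sqrt{2\ln n})$ and $b_n=1/\sqrt{2\ln n}$. Undoing the standardisation via $k=(n-1)p+\sqrt{(n-1)p(1-p)}\,t_n$, and absorbing the negligible difference between $(n-1)p,\ \sqrt{(n-1)p(1-p)}$ and their leading terms $np,\ \sqrt{np(1-p)}$, I would check that the centring and scaling reduce exactly to the stated normalising constants: using $\ln(2\sqrt{\pi})=\frac{1}{2}\ln(4\pi)$ one verifies $np+\sqrt{np(1-p)}\,a_n=a^*$, while $\sqrt{np(1-p)}\,b_n=b^*$.

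I expect the main obstacle to be Step 1: rigorously showing that the weak dependence among the degrees does not alter the limiting law of the maximum. The normal approximation of Step 2 is standard once the moderate-deviation range is secured by the hypothesis, and Step 3 is a routine calculation; but making the Poisson/asymptotic-independence argument airtight — demonstrating that the second-order Bonferroni term is genuinely negligible at the critical threshold, where the events are rare but the number of pairs is of order $n^2$ — is the delicate point, and is where I would concentrate most of the effort (or else appeal to the known result of Bollob\'as on the maximum degree of a dense random graph).
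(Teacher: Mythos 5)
Your overall strategy is sound, and it is in essence the rigorous version of what the paper itself does: the paper's ``proof'' of this theorem is a citation to Balbuena--Ortego together with a three-line sketch (approximate each degree by a normal, treat the $n$ degrees as independent so the maximum has law $\Phi^n$, then normalise to a Gumbel). Your Steps 1--2 supply exactly what that sketch elides, and your instinct that the dependence between degrees is the delicate point is correct --- handling it by first/second moments on the number of vertices above the threshold is how the result is actually proved in the literature. Two technical imprecisions, though. First, in Step 1 the pairwise term $\mathbb{E}[\#\{u\neq v:\,D_u\ge k,\,D_v\ge k\}]$ is \emph{not} of strictly smaller order: if the count $W$ is to be asymptotically $\mathrm{Poisson}(\lambda)$ with $\lambda=e^{-x^*}$, this quantity must tend to $\lambda^2$. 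What has to be proved is the factorisation $Pr(D_u\ge k,\,D_v\ge k)\sim Pr(D_u\ge k)\,Pr(D_v\ge k)$ (e.g.\ by conditioning on the shared indicator $X_{uv}$), so that the factorial moments of $W$ converge to those of the Poisson law. Second, in Step 2 the criterion $t_n=\sqrt{2\ln n}=o\bigl(\sqrt{np(1-p)}\bigr)$ is too weak: it yields only logarithmic tail asymptotics, whereas you need the ratio asymptotics $Pr(D_v\ge k)\sim 1-\Phi(t_n)$ in order to conclude $n\,Pr(D_v\ge k)\to e^{-x^*}$. By Cram\'er's moderate-deviation expansion the ratio carries a skewness factor $\exp\bigl((1-2p)\,t_n^3/(6\sqrt{np(1-p)})\bigr)$, which tends to $1$ only when $np(1-p)\gg(\ln n)^3$; this is automatic for fixed $p\in(0,1)$, but it is not implied by the stated hypothesis $np(1-p)/\ln n\to\infty$ if $p$ is allowed to vary with $n$.

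The more serious issue is that your final constants do not match the statement you set out to prove, contrary to your claim that they ``reduce exactly to the stated normalising constants.'' Your Step 3 gives the threshold $np+\sqrt{np(1-p)}\,(a_n+b_nx^*)=a^*+b^*x^*$, and your own verification $\sqrt{np(1-p)}\,b_n=b^*$ says precisely this. The theorem as printed, however, has threshold $a^*+\sqrt{np(1-p)}\,b^*x^*$, whose $x^*$-coefficient is $np(1-p)/\sqrt{2\ln n}$ --- larger than yours by the divergent factor $\sqrt{np(1-p)}$. These are different assertions, and only yours can be correct: the Gumbel fluctuations of the maximum degree live on the scale $b^*$, so with the printed scaling the limit would be a step function ($1$ for $x^*>0$, $0$ for $x^*<0$, $e^{-1}$ at $x^*=0$), not $e^{-e^{-x^*}}$. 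In short, your computation reproduces the standard form of the extreme-value result and thereby exposes an extra factor $\sqrt{np(1-p)}$ that appears to have been inserted erroneously in the paper's transcription of the cited theorem (an error that propagates into the paper's Lemma on degree centrality); but as a blind proof of the statement as written it does not go through, and a careful reading of your own Step 3 should have flagged this mismatch rather than asserting exact agreement.
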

\begin{proof}
Proven in \cite{Balbuena}. Sketch: The degree distribution of a random graph can be approximated by a normal distribution $\Phi$ for large values of $n$. Hence the distribution of the maximum degree is $\Phi^n$ from Extreme Value Theory. For appropriate normalizing factors $a^*$ and $b^*$, the standard Gumbel Distribution approximates $\Phi^n$.
\end{proof}

Theorem \ref{thm:max_deg} determines the probability that the maximum degree of a random network is less than some bound $\beta$. If we express the expected value of $deg_c(v^1_{max})$ as the same form as $\beta$, the $Pr(v^1_{max} = v^c_{max})$ follows from Theorem \ref{thm:max_deg}:

\begin{lemma}\label{deg_centrality}
Let $G^1_{n,p}$ and $G^2_{n,q}$ be networks on the same vertex set. Let $v^1_{max}$ and $v^c_{max}$ be the vertices with the highest Degree Centrality of $G^1$ and $G^c = G^1 \cup G^2$ respectively. Then for $n \rightarrow \infty$: 
\begin{equation*}
Pr(v^1_{max} = v^c_{max}) = e^{-e^{-x}},
\end{equation*}
where $x = \frac{\gamma p}{p-pq+q}$ and $\gamma$ is Euler-Mascheroni constant.
\end{lemma}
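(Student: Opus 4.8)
The plan is to reduce the event $\{v^1_{max}=v^c_{max}\}$ to a statement about the maximum degree of the composite network $G^c\sim G_{n,p'}$, where $p'=1-(1-p)(1-q)=p-pq+q$, so that Theorem~\ref{thm:max_deg} applies directly. The vertex $v^1_{max}$ stays the highest–degree vertex of $G^c$ exactly when no other vertex exceeds $deg_c(v^1_{max})$. Replacing the random quantity $deg_c(v^1_{max})$ by its mean $\beta:=E[deg_c(v^1_{max})]$, I would model this as the event that the maximum degree of $G^c$ remains below $\beta$, and hence aim to show
\begin{equation*}
Pr(v^1_{max}=v^c_{max}) = Pr(\mbox{Max Degree of }G^c < \beta).
\end{equation*}
The right-hand side is then read off from Theorem~\ref{thm:max_deg} with parameter $p'$, once $\beta$ is written in the form $a^*_{p'}+\sqrt{np'(1-p')}\,b^*_{p'}\,x$.

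The next step is to compute $\beta$. By inclusion–exclusion on the edges incident to $v^1_{max}$,
\begin{equation*}
deg_c(v^1_{max}) = deg_1(v^1_{max}) + deg_2(v^1_{max}) - \mathcal{C}(v^1_{max},G^1,G^2),
\end{equation*}
so I take expectations term by term. The crucial input is that $deg_1(v^1_{max})$ is precisely the maximum degree of $G^1_{n,p}$; since Theorem~\ref{thm:max_deg} says its normalised value converges to a standard Gumbel law, whose mean is the Euler–Mascheroni constant $\gamma$, we get $E[deg_1(v^1_{max})] = a^*_p+\sqrt{np(1-p)}\,b^*_p\,\gamma$. This is where $\gamma$ enters the final formula. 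Because $G^2$ is independent of $G^1$, the vertex $v^1_{max}$ is a generic vertex of $G^2$, so $E[deg_2(v^1_{max})]=(n-1)q$, and Lemma~\ref{lemma:ExpectedCommonEdges} gives $E[\mathcal{C}(v^1_{max},G^1,G^2)] = q\,E[deg_1(v^1_{max})]$. Collecting terms yields
\begin{equation*}
\beta = (1-q)\,E[deg_1(v^1_{max})] + (n-1)q.
\end{equation*}

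Finally I would substitute $E[deg_1(v^1_{max})]=a^*_p+\sqrt{np(1-p)}\,b^*_p\,\gamma$ into $\beta$, set $\beta=a^*_{p'}+\sqrt{np'(1-p')}\,b^*_{p'}\,x$, and solve for $x$. Using the identities $1-p'=(1-p)(1-q)$ and $p(1-q)+q=p'$, the leading $np'$ terms cancel against those in $a^*_{p'}$. The surviving deterministic discrepancy between the two $a^*$ terms is of order $\sqrt{n\ln n}$, whereas the fluctuation scale simplifies to $\sqrt{np'(1-p')}\,b^*_{p'}=np'(1-p')/\sqrt{2\ln n}$, of order $n/\sqrt{\ln n}$; dividing, this residue contributes $O(\ln n/\sqrt n)\to 0$. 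Only the Gumbel–mean term survives, and it simplifies cleanly:
\begin{equation*}
x \;\longrightarrow\; \frac{(1-q)\sqrt{np(1-p)}\,b^*_p\,\gamma}{\sqrt{np'(1-p')}\,b^*_{p'}} = \frac{\gamma\,p(1-p)(1-q)}{p'(1-p')} = \frac{\gamma p}{p-pq+q},
\end{equation*}
using $1-p'=(1-p)(1-q)$ once more in the last step. Plugging this $x$ into Theorem~\ref{thm:max_deg} gives $Pr(v^1_{max}=v^c_{max})=e^{-e^{-x}}$, as claimed.

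The main obstacle I anticipate is the bookkeeping in this last step: one must carry the sub-leading $\sqrt{n\ln n}$ corrections in $a^*_p$ and $a^*_{p'}$ explicitly and verify that, after division by the much larger scale $np'(1-p')/\sqrt{2\ln n}$, they vanish in the limit, leaving only the $\gamma$-term. The conceptual crux is recognising that $\gamma$ must appear because $E[deg_1(v^1_{max})]$ is the expectation of a Gumbel-distributed maximum rather than the bare leading term $a^*_p$; this is also why the formula does not degenerate to $1$ as $q\to 0$, since the approximation replaces the random top degree by its mean.
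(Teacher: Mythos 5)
Your proposal is correct and follows essentially the same route as the paper: you identify the event with the maximum degree of $G^c$ falling below $\mathbb{E}(deg_c(v^1_{max}))$, compute that expectation by inclusion--exclusion using the Gumbel mean $\gamma$ for $\mathbb{E}(deg_1(v^1_{max}))$ and Lemma~\ref{lemma:ExpectedCommonEdges} for the common edges, recast it in the form $a^*_{p'}+\sqrt{np'(1-p')}\,b^*_{p'}x$, and show the residual $a^*$-discrepancy is $O(\ln n/\sqrt{n})$ relative to the fluctuation scale, exactly as in the paper's Eq.~\ref{eq:degC_x}. The algebra, including the final simplification $\gamma p(1-p)(1-q)/\bigl(p'(1-p')\bigr)=\gamma p/(p-pq+q)$, matches the paper's computation.
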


\begin{proof}
From Theorem \ref{thm:max_deg}, the mean of the standard Gumbel distribution is given by $x^* = \gamma$ and hence the mean maximum-degree of $G^1$ is given by $a^* + \sqrt{np(1-p)}b^*\gamma$. For simplification, we group variable $n$ in each term together, i.e.
\begin{equation}\label{proof:expected_deg1}
\mathbb{E}(deg_1(v^1_{max})) = np + \sqrt{p(1-p)}C_1(n) + p(1-p)C_2(n)\gamma
\end{equation}
where
\begin{equation*}
C_1(n)=\sqrt{2n\ln{n}}\Big(1 - \frac{\ln{\ln{n}}}{4\ln{n}} - \frac{\ln{(2\sqrt{\pi})}}{2\ln{n}} \Big);
\end{equation*}
\begin{equation*}
C_2(n)=\frac{n\sqrt{2\ln{n}}}{2\ln{n}}=\frac{n}{\sqrt{2\ln{n}}}
\end{equation*}

The expected number of edges incident at $v^1_{max}$ in $G^2$  (i.e. $deg_2( v^1_{max}$)) is $nq$, thus the expected number of common edges incident at $v^1_{max}$ is determined by lemma \ref{lemma:ExpectedCommonEdges}:
\begin{eqnarray}
&&(np + \sqrt{p(1-p)}C_1(n) + p(1-p)C_2(n)\gamma)\cdot(nq)/(n-1)\nonumber \\
&\approx& npq + \sqrt{p(1-p)}qC_1(n) + pq(1-p)C_2(n)\gamma \label{proof:expected_ce}
\end{eqnarray}
Let $p' = 1 - (1-p)(1-q)$, the expected degree of $v^1_{max}$ at $G^c$ is Eq. \ref{proof:expected_deg1} + $nq$ - Eq. \ref{proof:expected_ce}:
\begin{eqnarray*}
\mathbb{E}(deg_c(v^1_{max})) &=& (np + \sqrt{p(1-p)}C_1(n) + p(1-p)C_2(n)\gamma) + nq \\
&& - (npq + \sqrt{p(1-p)}qC_1(n) + pq(1-p)C_2(n)\gamma)\\
&=& np' + (1-q)\sqrt{p(1-p)}C_1(n) + p(1-p)(1-q)C_2(n)\gamma
\end{eqnarray*}

We now rearrange the expression for  $\mathbb{E}(deg_c(v^1_{max}))$ to bring it into the same form as Eq. \ref{proof:expected_deg1}, except with the probability $p' = 1 - (1-p)(1-q)$ instead of $p$:
\begin{eqnarray*}
\mathbb{E}(deg_c(v^1_{max})) &=& np' + (1-q)\sqrt{p(1-p)}C_1(n) + p(1-p)(1-q)C_2(n)\gamma\\
&=& np' + \sqrt{p'(1-p')}C_1(n) + p(1-p)(1-q)C_2(n)\gamma\\
&& - (\sqrt{p'(1-p')} - (1-q)\sqrt{p(1-p)})C_1(n)\\
&=& np' + \sqrt{p'(1-p')}C_1(n) + p'(1-p')C_2(n)\Big(\frac{\gamma p}{p-pq+q}\Big)\\
&& - (\sqrt{p'(1-p')} - (1-q)\sqrt{p(1-p)})C_1(n)\\
&=& np' + \sqrt{p'(1-p')}C_1(n) +  p'(1-p')C_2(n)x
\end{eqnarray*}

And rewrite the equation such that the rest of the expression is in $x$
\begin{equation}\label{eq:degC_x}
x = \Big( \frac{\gamma p}{p-pq+q} - \frac{(\sqrt{p'(1-p')} - (1-q)\sqrt{p(1-p)})C_1(n)}{p'(1-p')C_2(n)} \Big)
\end{equation}
Finally $\lim_{n\rightarrow \infty}C_1/C_2 = \lim_{n\rightarrow \infty} O(\ln{n}/\sqrt{n}) = 0$ and with Theorem \ref{thm:max_deg}:
\begin{equation*}
\lim_{n\rightarrow \infty}Pr(\mbox{Max Degree of }G^c < \mathbb{E}(deg_c(v^1_{max}))) = e^{-e^{-x}}.
\end{equation*}
\end{proof}

Lastly in the proof of Theorem \ref{thm:max_deg} \cite{Balbuena}, it stated that the limit approximation works better if $p$ is not to be too large or small. Thus the same condition applies in our case to both p and q for Lemma \ref{deg_centrality}. Therefore since $0 \ll p,q \ll 1$, then $p$ and $q$ have to be within the same order of magnitude.

\subsection{Empirical Results}
\label{empirical}
Figure \ref{fig:deg_centrality} plots the asymptotic behaviour obtained from the simulations. Our  analytical results are describes the asymptotic behaviour of the system in the limit of  large values of $n$ and is accordingly unable to describe smaller networks well (details in section \ref{sec:deg_centrality_asy}). From the empirical results of Figure \ref{fig:deg_centrality}, we observe that the stability of the top Degree Centrality vertex decreases for increasing values of $n$.
\begin{center}
  \includegraphics[width=130mm]{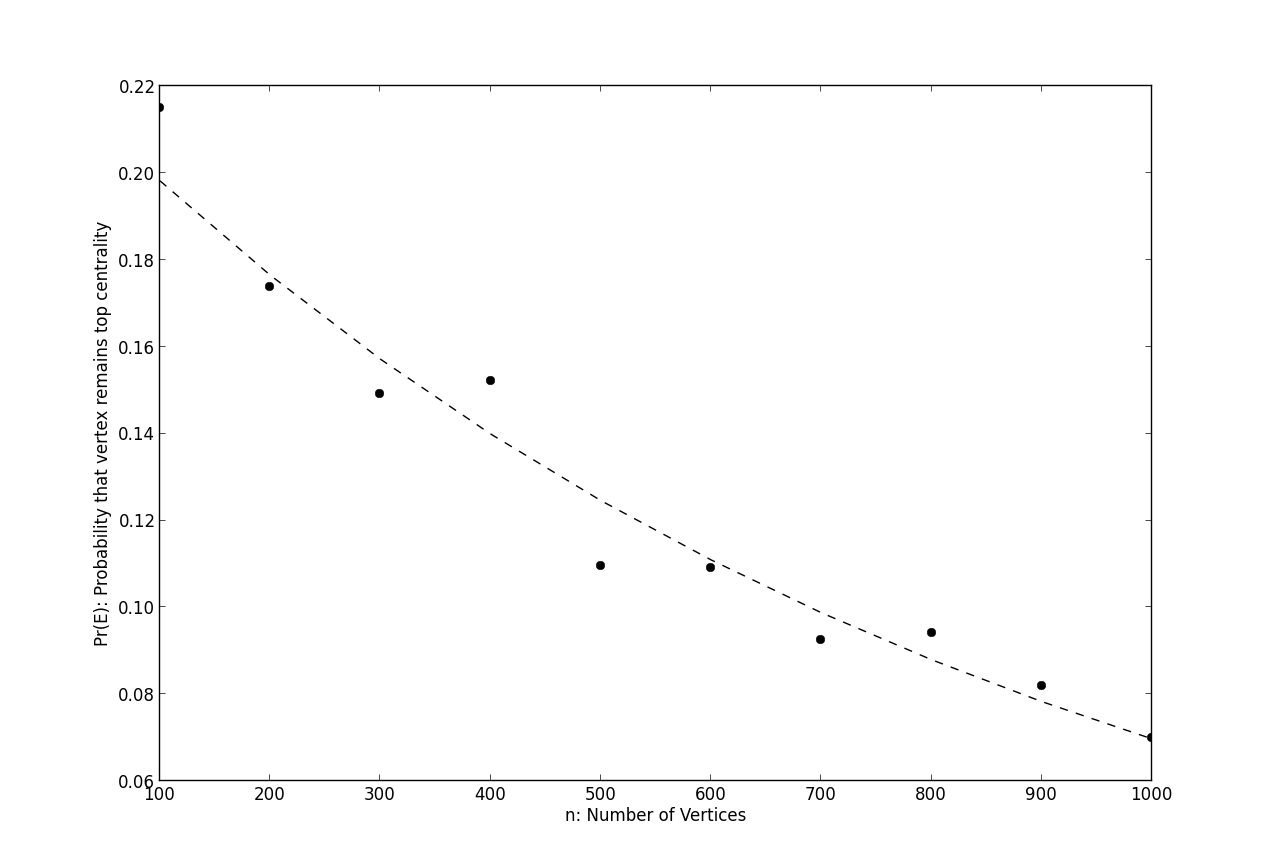}
    \captionof{figure}{The y-axis is the $Pr(E) = Pr(v^1_{max} = v^c_{max})$, where $p=q=0.025$. The best-fit line shows the exponential decay of the empirical simulations of $Pr(E)$ for increasing values of $n$. 10000 trials are made for each simulation point.}\label{fig:deg_centrality}
\end{center}

The degree of $v^1_{max}$ at $G^1$ is one of the factors that determines the probability that the node remains of highest centrality in the composite network. If the $deg_1(v^1_{max})$ is significantly larger then the rest of the vertices, then there is a higher probability that $v^1_{max}=v^c_{max}$. For example in the extreme case where $deg_1(v^1_{max}) = n-1$, it is not possible to find another vertex with degree greater than $n-1$ in $G^c$.

Since $deg_1(v^1_{max})$ is parameterized by $p$, Figure \ref{fig:deg_centrality_pq_ratio} shows the $Pr(E)$ for different ratios of $p$ and $q$, where without loss of generality $p > q$. As $p$ gets larger than $q$, the $Pr(E)$ increases. This is because for $p>q$, $G^1$ has more edges than $G^2$, hence $G^1$ dominates the behaviour of $G^c$.
\begin{center}
  \includegraphics[width=130mm]{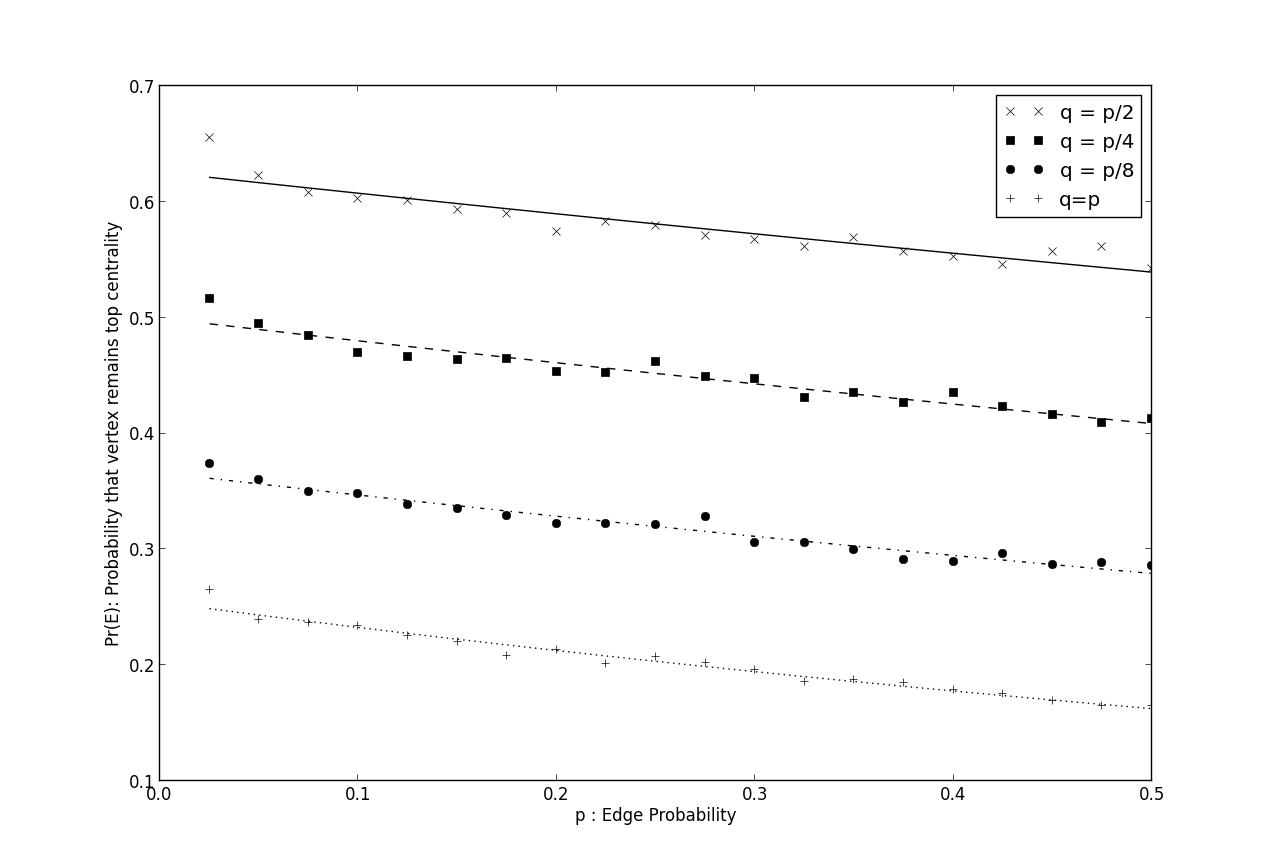}
    \captionof{figure}{$Pr(E) = Pr(v^1_{max} = v^c_{max})$ where $n=100$ and different $p:q$ ratios.}\label{fig:deg_centrality_pq_ratio}
\end{center}

\subsection{Asymptotic behaviour}\label{sec:deg_centrality_asy}
When one analyses the asymptotic behaviour of lemma \ref{deg_centrality} in the section above, it appears that the simulations in the section \ref{empirical} contradicts our analytical results. From the simulations, as $n$ gets larger, the probability decreases exponentially towards zero. However our analysis shows that it should converge to a nonzero limit. For example in Figure \ref{fig:deg_centrality}, the probability is 0.13 for $n=1000$. But for the same parameter, our analysis shows the convergence is to the limit $\approx 0.4739$!

The reasons lie in the details of Theorem \ref{thm:max_deg}. Firstly the theorem relies on Extreme Value Theory which is meaningful for the limits of very large values of $n$. 

Now we return to the issue that the asymptotic behaviour extrapolated  from the simulations differs from the analytical prediction. We believe this is to do with slow and non-monotonous convergence to the limit of large network sizes. We recall from Extreme Value Theory that the extreme of a set of $n$ normal distributed variables converges towards the Gumbel distribution logarithmically slow as $1/\ln(n)$ \cite{gumbel}. 

Since the degree distribution of the considered Erd\H{o}s-R\'{e}nyi networks is Gaussian, we accordingly expect very slow convergence towards the asymptotic result for the extremal degree. Moreover,  we conjecture that $Pr(E)$ depends non-monotonically on the number of nodes. The discrepancy between simulations and analytic results suggest that for increasing small-values of $n$, the probability initially decreases toward zero for then it increases to a non-zero limit. We imagine a behaviour similar to, say,  $y = (1 - 1.1^{-x})^{x}$.

From another perspective, the analytical result is intuitively understood by Order Statistics \cite{david}. The spacing between the $m$ largest and the $m-1$ largest  vertex degree approaches zero as $n \rightarrow \infty$ \cite{Mudholkar}. Hence we expect many vertices at the tail distribution to have degree approximately equal  to the maximum degree of the network. It is likely that $v^1_{max}$ belongs to the tail of the distribution of $G^c$, given that it belongs at the tail distribution of $G^1$. Thus it is   highly plausible that $v^1_{max}$ is also of maximum degree in $G^c$ and that the remaining the vertices have degrees belonging to the tail of the degree distribution of $G^c$.

\section{Betweenness Centrality and Eigenvector Centrality}\label{sec:other_centrality}
Betweenness Centrality and Eigenvector Centrality are positively well correlated to the Degree Centrality of a network \cite{Valente}. This is because the mechanics of the two centrality measures favour vertices with high degree.

For example Eigenvector Centrality is the stable state of all vertices where every vertex's score is the sum of the centrality scores of its neighbours. Hence vertices with higher degree (more neighbours) have more components in the sum, and this result in higher Eigenvector Centrality score. 

Similarly the Betweenness Centrality of a vertex $v$ is the probability that the shortest path between a randomly chosen pair of vertices passes through $v$. A vertex of high degree has many edges leading to it and will therefore be more likely, than a vertex of low degree, to connect to a given shortest path between two arbitrarily chosen vertices.

The hypothesis that Betweenness/Eigenvector Centrality are positively correlated to the Degree Centrality is mainly supported by empirical verification \cite{Valente}. However the discrepancy we have found between our analytical and numerical results for the asymptotic behaviour of Degree Centrality suggests that caution is needed, and that the asymptotic limit may be difficult to reach through numerical simulations. So despite of the correlations expected between the different centrality measures, it may not be valid  to assume that the asymptotic behaviour of degree centrality  immediately also describes the asymptotic behaviour of  Betweenness and Eigenvector Centrality.

\subsection{Asymptotic behaviour}
Degree Centrality could be studied analytically because we could treat the vertex degree like an independent random variable. However the Betweenness/Eigenvector Centrality score of every vertex is a global aggregation of scores from the entire network, thus these vertex scores are dependent on each other and we are unable to compute the extremal behaviour by using independent random variables.

Recall from section \ref{sec:deg_centrality_asy} that the degree differences (spacing) of the vertices belonging to the tail of the distribution approaches zero for $n \rightarrow \infty$ \cite{Mudholkar}.  Hence there are \emph{many} vertices (for sufficiently large $n$) of degree close to the maximum degree in the network. We denote this set of top-percentile Degree Centrality vertices by $T$, and $v^1_{max} \in T$.

We repeat that a vertex's score of its Betweenness/Eigenvector Centrality is a global perspective, hence the local measure of  Degree Centrality is not sufficient to determine the order of the ranking of  Betweenness/Eigenvector Centrality of vertices.  However, we may treat the set of other measures like Clustering Coefficient as some perturbation to the Betweenness/Eigenvector Centrality ranking. This allows the equal degree vertices to have unequal centrality ranking ``randomly". When we apply this assumption to the set of vertices $T$ belonging to the tail of the distribution  the ranking of $v^1_{max}$ will be randomly ordered among the vertices in $T$.

This suggest that we estimate the probability that $v^1_{max}$ is the top in the set $T$ as $1/|T|$. Finally since $|T| \rightarrow \infty$ slowly as $n \rightarrow \infty$, we conclude that $\lim_{n\rightarrow \infty}Pr(E) \approx 1/|T| = 0$. This is different from the asymptotic behaviour of Degree Centrality, which converges to a nonzero limit.
 
\subsection{Empirical Results}
Fortunately for small $n$, the $Pr(E)$ for the three centrality measures are well correlated. So for small-values of $n$, we expect an increase in $n$ to lead to a decline in the probability. In addition the rate of convergence of Betweenness/Eigenvector Centrality will be \emph{much} slower than those of Degree Centrality. This is because the top Betweenness/Eigenvector centrality vertex often only requires high degree, not necessary the highest degree. Figure \ref{fig:bet_centrality} and Figure \ref{fig:eig_centrality} plot the simulations of $P(E)$ for the Betweenness Centrality and Eigenvector Centrality respectively.
\begin{center}
  \includegraphics[width=130mm]{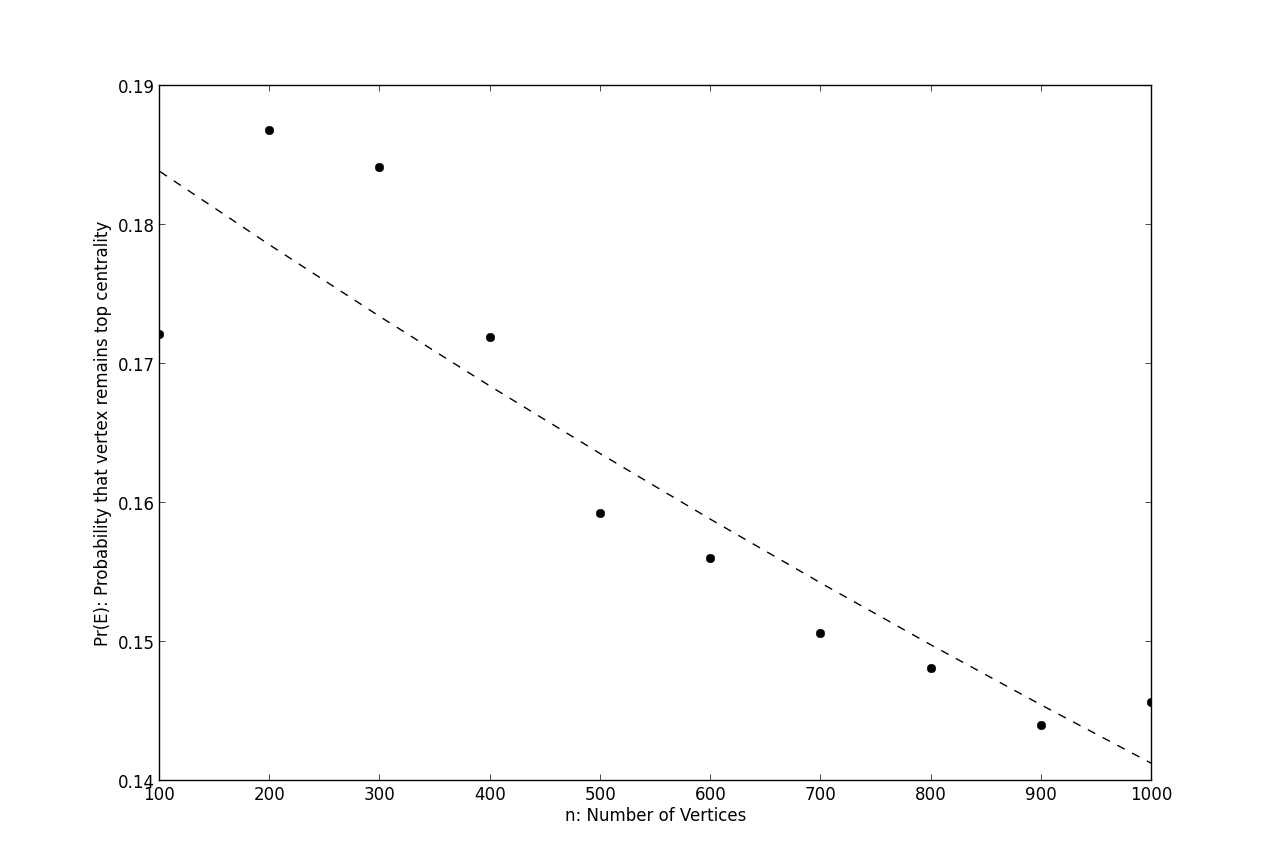}
    \captionof{figure}{The probability that the top Betweenness Centrality vertex of $G^1$ remains top in the composite network $G^1_{n,0.025} \cup G^2_{n,0.025}$. 10000 trials were made for each data point and the dotted line is the best-fit line.} \label{fig:bet_centrality}
\end{center}

\begin{center}
  \includegraphics[width=130mm]{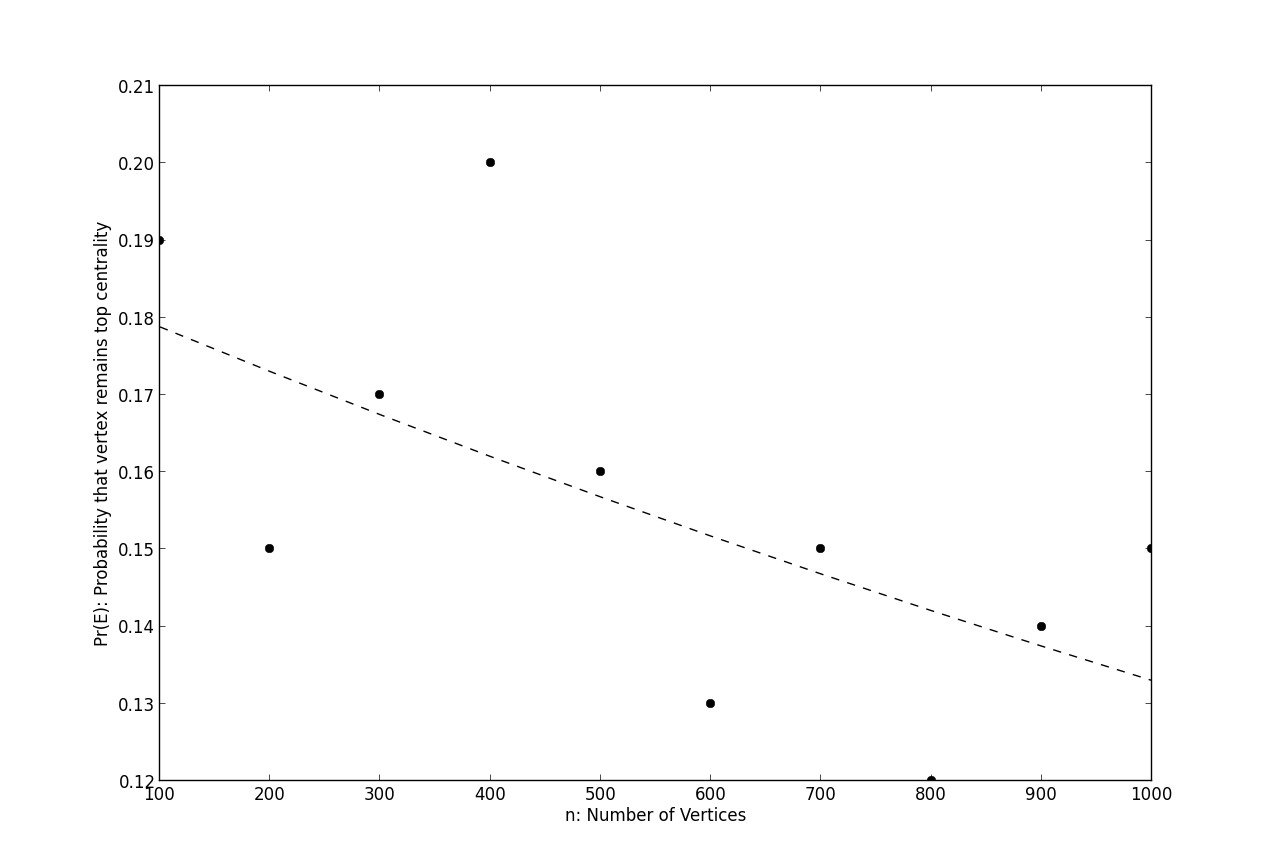}
    \captionof{figure}{The probability that the top Eigenvector Centrality vertex of $G^1$ remains top in the composite network $G^1_{n,0.025} \cup G^2_{n,0.025}$. 10000 trials were made for each data point and the dotted line is the best-fit line.} \label{fig:eig_centrality}
\end{center}

The rate  of decay for Eigenvector Centrality is much slower than Degree Centrality in Figure \ref{fig:deg_centrality}. For example simulations (not shown in figure) show that for Betweenness/Eigenvector Centrality, the $Pr(E) \approx 0.1$ at $n=5000$. Whereas for the Degree Centrality the same probability occurs at $n=700$.

\section{Discussion}
Degree Centrality, Betweenness Centrality and Eigenvector Centrality are some of the most common centrality metrics in Network Theory. This paper studies how these metrics change under the operation of edge union of networks on the same vertex set.

Under the edge union of Erd\H{o}s-R\'{e}nyi networks, the behaviour of the Degree Centrality is different from Eigenvector or Betweenness Centrality in the limits of large network. The analytical results helps to understand  the differences for the asymptotic behaviour of these metrics, which can be very hard to observe through simulations.

\subsection{Future Work}
The degree distribution of a  Erd\H{o}s-R\'{e}nyi Network follows a Poisson distribution, which is different from the power-law distribution found in many real-world networks. Hence a natural extension to this work is to consider the union between a scale-free network like Barab\'{a}si-Albert Network \cite{PhysRevE.65.057102} and a Erd\H{o}s-R\'{e}nyi Network. This result will be of relevance to the stability of centrality measures of real world networks \cite{Ghoshal,Costenbader}.
In this case the Erd\H{o}s-R\'{e}nyi Network (the $G^2$ network, abbreviated ER) can be perceived as a noisy perturbation on the interactions of the preferential attachment interactions in Barab\'{a}si-Albert Network (the $G^1$ network, abbreviated BA). Empirically the highest centrality vertices of BA is very stable, i.e. the probability that $v^1_{max}=v^c_{max}$ is high \cite{Ghoshal,Costenbader}. The construction of BA allows a minority of vertices to have degree a few orders of magnitude larger than the majority of the vertices.

What if this noisy perturbation follows a power-law distribution instead of a Poisson distribution? A hypothetical situation is when the ``noise" on the interactions is a confounding variable in the system that follows a power-law distribution. 

Figure \ref{fig:ba_union} compares the different stability behaviour of $BA\cup ER$ and $BA \cup BA$. The Degree Centrality of $BA \cup ER$ is extremely stable, where the highest centrality vertex remains top with 0.9 probability. In contrast $BA \cup BA$ shows a decline of $Pr(v^1_{max} = v^c_{max})$ for increasing network size.

Hence the centrality for BA will be significantly less stable if the noisy perturbation follows some preferential attachment. This could explain why the centrality of many real world static networks are inaccurate in predicting future outcomes. \cite{KimTAM12} showed that a random snapshot of a dynamic network loses some of the important relationships (edges), resulting in poor predictions. These missing edges might not be randomly chosen from a Gaussian model, but could possibly generated by mechanism similar to preferential attachment.

\begin{center}
  \includegraphics[width=130mm]{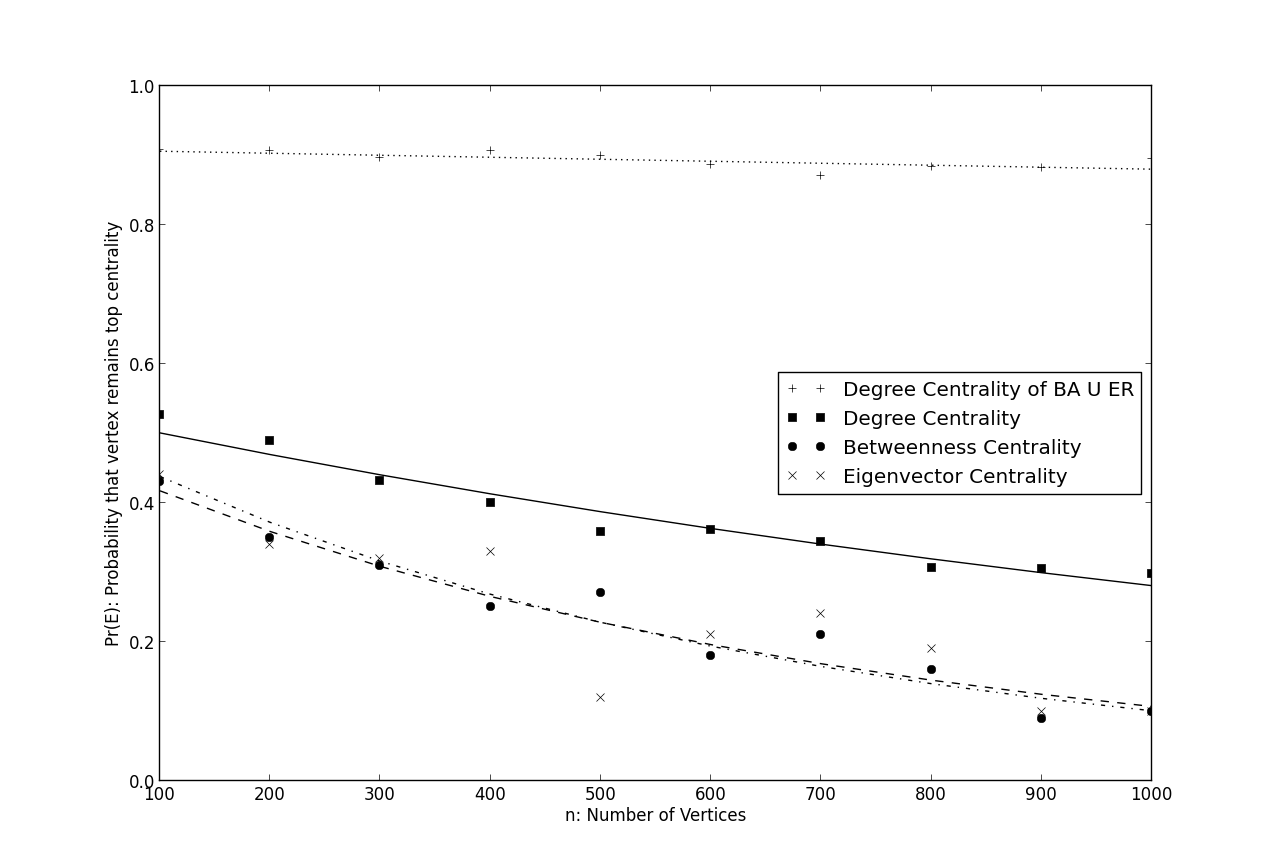}
    \captionof{figure}{The different centrality metrics of $BA \cup BA$ for increasing values of $n$ (network size). We left out the betweenness/eigenvector centrality of $BA \cup ER$ is because they are similar to the Degree Centrality of $BA \cup ER$.}\label{fig:ba_union}
\end{center}

\section*{References} 

\end{document}